\documentclass[11pt,a4paper]{article}
\usepackage[utf8]{inputenc}
\usepackage[T1]{fontenc}
\usepackage{amsmath,amssymb,amsthm}
\usepackage{mathtools}
\usepackage[margin=1in]{geometry}
\usepackage{graphicx}
\usepackage[british]{babel}
\usepackage{amsfonts}
\usepackage{microtype} 
\usepackage{caption}
\usepackage{appendix}
\usepackage[colorlinks=true, linkcolor=blue, citecolor=blue, urlcolor=blue]{hyperref}

\bibliographystyle{plain}

\emergencystretch=1em

\newtheorem{theorem}{Theorem}[section]
\newtheorem{definition}[theorem]{Definition}
\newtheorem{proposition}[theorem]{Proposition}
\newtheorem{lemma}[theorem]{Lemma}

\theoremstyle{remark}
\newtheorem{remark}[theorem]{Remark}

\newcommand{\R}{\mathbb{R}}
\newcommand{\E}{\mathbb{E}}
\newcommand{\dd}{\mathop{}\!\mathrm{d}}
\newcommand{\Malpha}{M_\alpha}
\newcommand{\gsalpha}{g_s^{(\alpha)}}
\newcommand{\gvalpha}{g_v^{(\alpha)}}

\newcommand{\Zsalpha}{Z_s^{(\alpha)}}
\newcommand{\gsalphaOne}{g_s^{(1)}} 
\newcommand{\gvalphaOne}{g_v^{(1)}} 
\newcommand{\gsalphaTwo}{g_s^{(2)}} 
\newcommand{\hcal}{\mathcal{H}_\alpha}
\newcommand{\pF}{p^F} 
\newcommand{\rhoM}{\rho^{\mathrm{MMSE}}} 

\title{Mixed Fractional Information: Consistency of Dissipation Measures for Stable Laws}
\author{William Cook \\ \textit{University of Bristol}}
\date{16 April 2025} 
\hypersetup{
    pdftitle={Mixed Fractional Information: Consistency of Dissipation Measures for Stable Laws},
    pdfauthor={William Cook},
    pdfsubject={Information Theory, Stable Distributions},
    pdfkeywords={Stable Distributions, Information Theory, Relative Entropy, Fractional de Bruijn Identity, MMSE Estimation, Score Function, Entropy Dissipation, Fisher Score, Consistency, MMSE Score Function, Weighted Functional Inequalities, Information Geometry, Stable Entropy Power Inequality} 
}

\pdfstringdefDisableCommands{%
  \def\alpha{alpha}
  \def\Malpha{M_alpha}
  \def\gsalpha{g_s^(alpha)}
  \def\gvalpha{g_v^(alpha)}%
  \def\hcal{H_alpha}
}

\begin{document}
\maketitle
\begin{abstract}
Symmetric $\alpha$‑stable (S\(\alpha\)S) laws with $\alpha<2$ model heavy‑tailed data but render classical Fisher information infinite, hindering standard entropy‑information analyses. Building on Johnson’s fractional de Bruijn framework, we consider the initial rate of relative‑entropy dissipation along the interpolation path \( X_t = (1-t)^{1/\alpha} X_0 + t^{1/\alpha} Z_s^{(\alpha)} \), where \( Z_s^{(\alpha)} \) is an S\(\alpha\)S random variable with scale \(s\). This paper defines and analyses \emph{Mixed Fractional Information} (MFI) specifically for the fundamental case where the initial distribution \(X_0\) is also S\(\alpha\)S, having scale parameter \(v\).

We show that, \emph{in this S\(\alpha\)S-to-S\(\alpha\)S setting}, MFI admits two equivalent calculation methods:
\begin{enumerate}
    \item A chain‑rule expression involving the derivative \(D'(v)\) of the relative entropy \(D(g_v^{(\alpha)} \| g_s^{(\alpha)})\) with respect to the scale parameter \(v\).
    \item An integral representation \(\mathbb{E}_{g_v}\bigl[u(x,0)\,\bigl(\pF_v(x)-\pF_s(x)\bigr)\bigr]\), where \(u(x,0) = x(v-s)/(sv)\) is the specific MMSE-related score function for this linear scale interpolation path, and \(\pF\) denotes the Fisher score.
\end{enumerate}
Our main result is a rigorous proof of the consistency identity
\[
  D'(v)
  = \frac{1}{\alpha\,v}\,
    \mathbb{E}_{g_v}\bigl[X\,\bigl(\pF_v(X)-\pF_s(X)\bigr)\bigr].
\]
This identity mathematically confirms the equivalence of the two MFI calculation methods for S\(\alpha\)S inputs, establishing the internal coherence of the measure within this family. We further prove MFI's non‑negativity (vanishing if and only if \(v=s\)), derive its closed‑form expression for the Cauchy case (\(\alpha=1\)), and validate the consistency identity numerically to \(10^{-6}\) accuracy for general \(\alpha\).

Within the symmetric \(\alpha\)-stable family, MFI thus provides a finite, coherent, and calculable analogue of Fisher information, rigorously connecting entropy dissipation rates to score functions and MMSE-related quantities. This work lays a foundation for exploring potential fractional I-MMSE relations and new (weighted) functional inequalities tailored to heavy-tailed settings.
\end{abstract}

\textbf{Keywords:} Stable Distributions, Information Theory, Relative Entropy, Fractional de Bruijn Identity, MMSE Estimation, Score Function, Entropy Dissipation, Fisher Score, Consistency, MMSE Score Function, Weighted Functional Inequalities, Information Geometry, Stable Entropy Power Inequality.

\tableofcontents

\section{Introduction}

\subsection{Motivation: Extending Information Theory to Stable Laws}
\label{sec:motivation} 

Classical information theory finds elegant expression in Gaussian settings. There, Fisher information \(J(h)\) and the de Bruijn identity precisely quantify entropy dynamics under additive Gaussian noise \cite{stam, barron}, underpinning numerous applications in science and engineering. This framework's reach, however, is curtailed when confronting the heavy-tailed phenomena prevalent in diverse fields such as finance, signal processing, and physics. Such phenomena are often modelled effectively by symmetric \(\alpha\)-stable (S\(\alpha\)S) distributions with stability index \(\alpha \in (0, 2)\). For these distributions, the situation changes fundamentally. Their characteristic power-law density decay, \(\gsalpha(x) \sim c_\alpha |x|^{-1-\alpha}\) as \(|x| \to \infty\) (where \(c_\alpha > 0\) is a constant depending on \(\alpha\)), implies infinite variance (for \(\alpha < 2\)) and ensures the integrand defining classical Fisher information, \(h(x) (\pF_h(x))^2\), decays too slowly for convergence over \(\mathbb{R}\). Consequently, \(J(h) = \int_{\mathbb{R}} h(x) (\pF_h(x))^2 \, \dd x\) diverges whenever \(h\) is an S\(\alpha\)S density with \(\alpha<2\), rendering this cornerstone measure unusable and highlighting a critical gap in the standard toolkit.

This divergence is not merely technical; it signifies a fundamental limitation of standard information-theoretic tools in non-Gaussian, heavy-tailed regimes, necessitating new conceptual frameworks. A promising direction stems from the fractional de Bruijn identity framework developed by Johnson \cite{Johnson2018} specifically for S\(\alpha\)S target densities (crucially, covering the full range \(\alpha \in (0, 2]\), including the Gaussian endpoint). Johnson's work provides structures to analyse entropy dissipation beyond the Gaussian paradigm, using dynamics inherent to stable processes.

Building on this foundation, this paper defines and rigorously analyses Mixed Fractional Information (MFI) \emph{specifically for the family of symmetric \(\alpha\)-stable distributions}. Within this context, MFI is derived from the initial rate of relative entropy dissipation along canonical S\(\alpha\)S interpolation paths of the form \(X_t = (1-t)^{1/\alpha} X_0 + t^{1/\alpha} Z_s^{(\alpha)}\). We focus particularly on the fundamental case where both the initial distribution \(X_0\) and the target noise \(Z_s^{(\alpha)}\) correspond to S\(\alpha\)S densities (e.g., with scales \(v\) and \(s\), respectively), investigating MFI's properties and internal consistency \emph{within this setting}. Our analysis establishes MFI as a finite, consistent, and theoretically grounded measure \emph{for comparing pairs of S\(\alpha\)S distributions}, providing a viable alternative exactly where classical Fisher information diverges for this important class of laws. MFI's relationship to other generalisations and alternative approaches is detailed in Section~\ref{sec:related_work}.

\subsection{\texorpdfstring{Symmetric \(\alpha\)-Stable Distributions}{Symmetric alpha-Stable Distributions}}
An S\(\alpha\)S random variable \(\Zsalpha\) is characterised by its characteristic function:
\begin{equation} \label{eq:cf}
\phi_s(k) = \E[e^{ik \Zsalpha}] = e^{-s|k|^\alpha}, \quad k \in \R,
\end{equation}
where \(\alpha \in (0, 2]\) is the stability index and \(s > 0\) is the scale parameter. We denote its probability density function by \(\gsalpha(x)\). We denote the derivative with respect to \(x\) as \(g_s^{(\alpha)\prime}(x)\). Key properties include \cite{samorodnitsky, zolotarev3}:
\begin{itemize}
    \item \textbf{Stability}: If \(Z_u^{(\alpha)} \sim g_u^{(\alpha)}\) and \(Z_v^{(\alpha)} \sim g_v^{(\alpha)}\) are independent, then their sum \(Z_u^{(\alpha)} + Z_v^{(\alpha)} \sim g_{u+v}^{(\alpha)}\). This corresponds to the convolution property \(g_u^{(\alpha)} * g_v^{(\alpha)} = g_{u+v}^{(\alpha)}\).
    \item \textbf{Scaling}: For any constant \(a \in \R\), \(a \Zsalpha \sim g_{|a|^\alpha s}^{(\alpha)}\). Explicitly, $\gvalpha(x) = v^{-1/\alpha} g_1^{(\alpha)}(v^{-1/\alpha} x)$.
\end{itemize}
Special cases relevant to this work are:
\begin{itemize}
    \item \(\alpha = 2\) (Gaussian): \(\gsalphaTwo(x) = (4\pi s)^{-1/2} e^{-x^2 / (4s)}\). This corresponds to a Gaussian distribution with mean 0 and variance \(2s\).
    \item \(\alpha = 1\) (Cauchy): \(\gsalphaOne(x) = \frac{1}{\pi} \frac{s}{s^2 + x^2}\).
\end{itemize}

\paragraph{Symmetry Assumption.}
Throughout this paper, we focus exclusively on \textit{symmetric} \(\alpha\)-stable (S\(\alpha\)S) distributions. Generalisations to asymmetric stable laws are beyond the current scope. See Section \ref{sec:assumptions} for further discussion.

\paragraph{Parameter Range.}
We consider \(\alpha \in (0, 2]\) and scale parameters \(s, v > 0\).

We denote the Fisher score of a density \(f\) by \(\pF_f = f'/f\). For S\(\alpha\)S densities \(\gvalpha\), we may use the shorthand \(\pF_v = \pF_{\gvalpha}\) when the context is clear.

\paragraph{Relative score notation.}
For later brevity, when comparing two S\(\alpha\)S densities
\(g_v^{(\alpha)}\) and \(g_s^{(\alpha)}\), we denote their
\textit{score difference} by
\[
  \delta\pF_{v,s}(x)\;:=\;\pF_{\gvalpha}(x)-\pF_{\gsalpha}(x).
\]

\subsection{Johnson’s Fractional de Bruijn Identity Framework}
Johnson \cite{Johnson2018} introduced an interpolation process between an arbitrary initial random variable \(X_0\) with density \(h\) and an S\(\alpha\)S variable \(\Zsalpha\):
\begin{equation} \label{eq:interp}
X_t = (1-t)^{1/\alpha} X_0 + t^{1/\alpha} \Zsalpha, \quad t \in [0, 1).
\end{equation}
The density \(h_t\) of \(X_t\) evolves according to the following fractional Fokker–Planck-type equation \cite{Johnson2018} of the form:
\begin{equation} \label{eq:pde}
\frac{\partial h_t}{\partial t}(x) = \frac{s}{\alpha (1-t)} \frac{\partial}{\partial x} [h_t(x) u(x, t)],
\end{equation}
Here, \(u(x,t)\) acts as a score function along the interpolation, encoding the direction of mass transport. It is intimately linked to MMSE estimation, as formalised in \cite{Johnson2018} and discussed in Section \ref{sec:discussion_mfi_mmse}. Johnson derived a fractional de Bruijn identity (FdBR) relating the rate of change of relative entropy \(D(h_t \| \gsalpha) = \int h_t \log (h_t / \gsalpha) \, \dd x\) to an integral functional involving \(u(x, t)\) and the Fisher scores \(\pF_f(x) = f'(x)/f(x)\) (where \(f'(x)\) denotes differentiation with respect to \(x\)). Adapting \cite[Theorem 5.1, Eq. 5.1]{Johnson2018}: 
\begin{equation} \label{eq:fdbr}
\frac{d}{dt} D(h_t \| \gsalpha) = -\frac{s}{\alpha (1-t)} \int_{-\infty}^{\infty} h_t(x) u(x, t) (\pF_{h_t}(x) - \pF_{\gsalpha}(x)) \, \dd x.
\end{equation}

\subsection{Related Work and Positioning} \label{sec:related_work}
The challenge of defining finite and meaningful information measures for stable laws, where classical Fisher information diverges, has spurred several research directions. This work focuses on Mixed Fractional Information (MFI), derived directly from the entropy dissipation rate in Johnson's interpolation framework \cite{Johnson2018}. This approach is distinct from, yet complementary to, other notable efforts:

\begin{itemize}
    \item \textbf{Fractional Fisher Information (FFI):} Developed by Toscani and others \cite{Toscani2016}, FFI typically employs fractional derivatives (e.g., Riesz-Feller) in its definition. Relative versions of FFI (\(I_\lambda(X|Z_\lambda)\)) have been proposed to remain finite when comparing a general distribution to a stable law, providing tools for studying entropy inequalities and convergence towards stable limits within a calculus-based framework \cite{Toscani2015b, Toscani2016}. MFI, in contrast, arises from the dynamics of a specific stochastic process (Eq. \ref{eq:interp}) and connects entropy change to an MMSE-related score, rather than directly using fractional derivatives in its definition.

    \item \textbf{Generalised Measures and Estimation Bounds:} Motivated by parameter estimation in impulsive (alpha-stable) noise, researchers like Saad and Fayed \cite{SaadFayed2016} have proposed alternative "alpha-power" concepts intended to replace variance and generalised Fisher information analogues. Their goal is often to establish performance bounds (like Cramer-Rao type inequalities) suitable for stable noise environments, focusing more directly on estimation theory applications.
\end{itemize}
MFI, defined via Eq. \ref{eq:mfi_def}, leverages the specific structure of the stable interpolation path and the associated score function \(u(x, t)\), which incorporates an MMSE-related term \cite{Johnson2018}. This inherent connection to estimation (discussed further in Section \ref{sec:discussion_mfi_mmse}) and its interpretation as an entropy dissipation rate distinguish MFI from FFI and alpha-power frameworks, suggesting it may offer unique insights into the interplay between information dynamics and estimation in stable systems.

\subsection{Contributions} \label{sec:contributions}
Our main contributions within the MFI framework are:
\begin{enumerate}
    \item Defining the Mixed Fractional Information \(\Malpha(h \| \gsalpha)\) via the initial dissipation rate of relative entropy (Section \ref{sec:def_mfi}), leading to two calculation methods: a chain rule form and an integral form.
    \item Deriving the exact analytical form of the initial score function \(u(x, 0) = x (v - s)/(sv)\) for the S\(\alpha\)S-to-S\(\alpha\)S interpolation (Section \ref{sec:score}).
    \item Providing a rigorous analytical proof (Section \ref{sec:consistency}, Appendix \ref{app:consistency_proof}) establishing the mathematical identity \(D'(v) = \frac{1}{\alpha v} \mathbb{E}_{g_v}[X (\pF_v - \pF_s)]\). This identity equates the chain rule and integral formulations of MFI, confirming the framework's internal consistency for S\(\alpha\)S distributions and directly linking entropy derivatives to Fisher score differences.
    \item Proving the non-negativity \(\Malpha(\gvalpha \| \gsalpha) \geq 0\), with equality iff \(v = s\), using fundamental properties of relative entropy (Section \ref{sec:positivity}).
    \item Providing explicit formulas for the Cauchy case (\(\alpha = 1\)) using the established closed-form relative entropy \cite{ChyzakNielsen2019} (Section \ref{sec:cauchy}).
    \item Validating the consistency identity through high-precision numerical simulations using robust methods detailed in Appendix \ref{app:numerical_validation}, confirming the theoretical results (Table \ref{tab:numval_tier2}). This process also clarified that a previously considered Log-Sobolev inequality for the Cauchy case does not hold (Section \ref{sec:discussion_lsi}).
\end{enumerate}
The proven consistency solidifies the MFI calculation framework for S\(\alpha\)S inputs and highlights its structural connection to MMSE-like quantities.

\section{Mixed Fractional Information and Score Function}

\subsection{Definition of MFI and Regularity Conditions} \label{sec:def_mfi}

Classical Fisher information diverges for S\(\alpha\)S distributions with \(\alpha < 2\). Building on Johnson’s FdBR \cite{Johnson2018}, we define MFI via the initial rate of relative entropy dissipation.

\begin{definition}[Regularity Class \texorpdfstring{\(\hcal\)}{H_alpha}] \label{def:hcal}
 The regularity class \(\hcal\) consists of probability densities \(h\) on \(\R\) satisfying sufficient conditions for the MFI framework to be well-posed. Specifically, \(h\) must ensure that:
        \begin{itemize}
            \item \(h(x) > 0\) almost everywhere, \(h \in L^1(\mathbb{R})\), and \(\int |h(x) \log h(x)| \, \dd x < \infty\).
            \item The relative entropy \(D(h \| \gsalpha) = \int h \log(h/\gsalpha) \, \dd x\) is finite.
            \item The density \(h_t\) of the interpolated variable \(X_t\) (Eq. \ref{eq:interp}) exists for \(t\) in some interval \([0, \epsilon)\).
            \item The function \(t \mapsto D(h_t \| \gsalpha)\) is differentiable at \(t=0\).
            \item The Fisher score \(\pF_h = h'/h\) exists (requiring appropriate smoothness of \(h\)).
            \item The score function \(u(x,0)\) associated with the interpolation starting from \(h\) exists and is well-behaved.
            \item The integral defining the FdBR at \(t=0\) (Eq. \ref{eq:fdbr}) converges absolutely: \(\int h(x) |u(x, 0) (\pF_{h}(x) - \pF_{\gsalpha}(x))| \, \dd x < \infty\).
        \end{itemize}
        Symmetric \(\alpha\)-stable densities \(\gvalpha\) satisfy these conditions. Specifically:
        \begin{itemize}
            \item Finiteness of \(D(\gvalpha \| \gsalpha)\): The integrand \(g_v^{(\alpha)}(x)\log(g_v^{(\alpha)}(x)/g_s^{(\alpha)}(x))\) decays like \(O(|x|^{-1-\alpha})\) as \(|x|\to\infty\) (since the log term approaches a constant) and is integrable near \(x=0\). Thus, the KL divergence is finite \cite{samorodnitsky}.
                    \item Differentiability of \(t \mapsto D(h_t \| \gsalpha)\) at \(t=0\): For \(h_t = g_{v(t)}^{(\alpha)}\), differentiating \(D(v(t))\) under the integral is justified by the Dominated Convergence Theorem.  Indeed,
        \[
           \frac{\partial}{\partial v}\Bigl[g_v(x)\log\frac{g_v(x)}{g_s(x)}\Bigr]
           =\frac{\partial g_v}{\partial v}(1+\log(g_v/g_s))
           =O\bigl(|x|^{-1-\alpha}\bigr)
        \]
        uniformly for \(v\) in any compact subset of \((0,\infty)\), and since \(|x|^{-1-\alpha}\) is integrable at infinity for \(\alpha>0\), the interchange is valid (see Appendix \ref{app:consistency_proof}, Step 2).

            \item Smoothness and decay properties of S\(\alpha\)S densities \cite{samorodnitsky, zolotarev3} also ensure the existence of Fisher scores and the convergence of the FdBR integral at \(t=0\).
        \end{itemize}
        Thus, \(\gvalpha \in \hcal\). Further characterisation of \(\hcal\) for general densities is discussed in Section \ref{sec:assumptions}.
\end{definition}

\begin{definition}[Mixed Fractional Information] \label{def:mfi}
For \(h \in \hcal\), \(\gsalpha\) (scale \(s\)), and \(X_t\) as in Eq. \ref{eq:interp} with \(h_0=h\):
\begin{equation} \label{eq:mfi_def}
\Malpha(h \| \gsalpha) = -\frac{\alpha}{s} \frac{d}{dt} D(h_t \| \gsalpha) \bigg|_{t=0}.
\end{equation}
This definition assumes \(h \in \hcal\), the regularity class defined in Definition \ref{def:hcal}, ensuring differentiability of entropy flow and well-posedness of the interpolation process. It isolates the integral term from the FdBR (Eq. \ref{eq:fdbr}) at \(t=0\) and scales it, providing a measure directly related to the initial dissipation mechanism inherent in the stable interpolation process. Intuitively, \(\Malpha(h \| \gsalpha)\) measures how rapidly the distribution \(h_t\) initially diverges (in relative entropy) from \(\gsalpha\) under the fractional-noise interpolation process. It generalises Fisher Information to the heavy-tailed setting.
\end{definition}

\subsection{Connection to Integral Form}
From the FdBR (Eq. \ref{eq:fdbr}) evaluated at \(t=0\), the MFI corresponds to the integral:
\begin{equation} \label{eq:mfi_integral}
\Malpha(h \| \gsalpha) = \int_{-\infty}^{\infty} h(x) u(x, 0) (\pF_{h}(x) - \pF_{\gsalpha}(x)) \, \dd x,
\end{equation}
Note: \(u(x,0)\) depends on the interpolation path originating from the specific initial density \(h\). An explicit form is derived in Lemma \ref{lem:score} only for the case \(h = \gvalpha\). For general \(h\), the score \(u(x,0)\) must be computed from the dynamics of \(h_t\).

\subsection{Analytical Score Function for S\texorpdfstring{$\alpha$}{alpha}S Initial Density} \label{sec:score}

Consider the interpolation \(X_t = (1-t)^{1/\alpha} X_0 + t^{1/\alpha} \Zsalpha\) starting from \(X_0 \sim \gvalpha\).
Using the scaling and stability properties, the density of \(X_t\) is \(h_t = g_{v(t)}^{(\alpha)}\), where the effective scale evolves as:
\begin{equation} \label{eq:vt_def}
v(t) = (1-t)v + ts.
\end{equation}

\begin{figure}[htbp]
  \centering
  \includegraphics[width=0.75\linewidth]{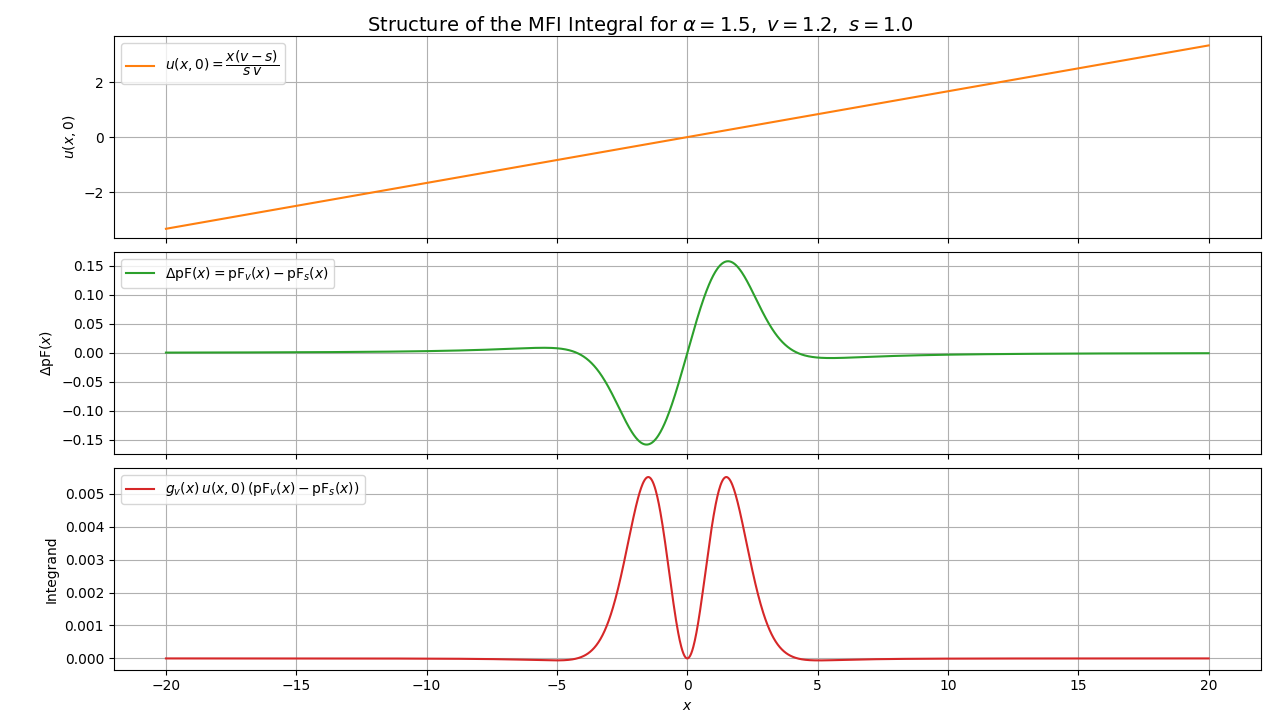}
  \caption{Structure of the MFI integral for symmetric $\alpha$-stable densities under scale interpolation, with $\alpha = 1.5$, initial scale $v = 1.2$, and target scale $s = 1.0$. Top: the MMSE-related score $u(x, 0) = \frac{x (v - s)}{s v}$. Middle: the difference of Fisher scores $\Delta\mathrm{pF}(x) = \mathrm{pF}_{v}(x) - \mathrm{pF}_{s}(x)$. Bottom: the integrand $g_v(x)\, u(x, 0)\, \Delta\mathrm{pF}(x)$ whose integral over $\mathbb{R}$ gives the MFI.}
  \label{fig:structure_mfi_integral}
\end{figure}

\begin{lemma} \label{lem:score}
For the interpolation starting at \(h_0 = \gvalpha\), the score function \(u(x, t)\) from Eq. \ref{eq:pde} at time \(t\) is:
\[ u(x, t) = x \left( \frac{1}{s} - \frac{1}{v(t)} \right), \]
and the initial score function is:
\[ u(x, 0) = \frac{x (v - s)}{sv}. \]
\end{lemma}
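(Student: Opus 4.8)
The plan is to substitute the known form of the interpolated density into the fractional Fokker--Planck equation \eqref{eq:pde} and solve for $u$. By the scaling and stability properties of Section~1.2, starting from $X_0 \sim \gvalpha$ gives $h_t = \gvtalphas$ with $v(t) = (1-t)v + ts$ (Eq.~\eqref{eq:vt_def}), so $v'(t) = s-v$ and, by the chain rule, $\partial_t h_t(x) = (s-v)\,\partial_v \gvalpha(x)\big|_{v=v(t)}$. The strategy is then to rewrite the $v$-derivative of the stable density as an $x$-divergence, so that \eqref{eq:pde} becomes an identity of the form $\partial_x[\,\cdot\,] = \partial_x[\,\cdot\,]$ that can be integrated in $x$ to isolate $h_t\,u$, and hence $u$.

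The key computational input is the identity
\[
  \frac{\partial}{\partial v}\,\gvalpha(x) \;=\; -\frac{1}{\alpha v}\,\frac{\partial}{\partial x}\!\bigl[x\,\gvalpha(x)\bigr].
\]
I would prove this in either of two equivalent ways. (i) Via the characteristic function \eqref{eq:cf}: $\partial_v \phi_v(k) = -|k|^\alpha \phi_v(k)$, whereas the Fourier transform of $\partial_x[x\,\gvalpha(x)]$ equals $-k\,\partial_k \phi_v(k) = \alpha v\,|k|^\alpha \phi_v(k) = -\alpha v\,\partial_v\phi_v(k)$; inverting the transform gives the claim, using only that $\gvalpha$ and its transform are well-behaved. (ii) Via the explicit scaling $\gvalpha(x) = v^{-1/\alpha} g_1^{(\alpha)}(v^{-1/\alpha}x)$: writing $y = v^{-1/\alpha}x$ one checks directly that $x\,\gvalpha(x) = y\,g_1^{(\alpha)}(y)$ and $\partial_v \gvalpha(x) = -\tfrac{1}{\alpha v}\,v^{-1/\alpha}\tfrac{d}{dy}\bigl[y\,g_1^{(\alpha)}(y)\bigr]$, which is exactly $-\tfrac{1}{\alpha v}\,\partial_x[x\,\gvalpha(x)]$. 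Both routes use only smoothness of $g_1^{(\alpha)}$, valid for all $\alpha\in(0,2]$.

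Combining the two displays yields $\partial_t h_t(x) = \dfrac{v-s}{\alpha\,v(t)}\,\partial_x\!\bigl[x\,\gvtalphas(x)\bigr]$, and equating this with the right-hand side of \eqref{eq:pde} gives
\[
  \frac{\partial}{\partial x}\!\bigl[h_t(x)\,u(x,t)\bigr]
  \;=\; \frac{(1-t)(v-s)}{s\,v(t)}\;\frac{\partial}{\partial x}\!\bigl[x\,\gvtalphas(x)\bigr].
\]
Integrating in $x$ and using that both $h_t(x)u(x,t)$ and $x\,\gvtalphas(x)$ vanish as $|x|\to\infty$ — the latter being $O(|x|^{-\alpha})$ by the power-law tail of stable densities — forces the $x$-independent constant of integration to be zero, so $h_t(x)u(x,t) = \frac{(1-t)(v-s)}{s\,v(t)}\,x\,\gvtalphas(x)$. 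Since $\gvtalphas(x)>0$ everywhere, dividing gives $u(x,t) = \frac{(1-t)(v-s)}{s\,v(t)}\,x$; the algebraic identity $(1-t)(v-s) = v(t)-s$ rewrites this as $u(x,t) = x\bigl(\tfrac1s - \tfrac1{v(t)}\bigr)$, and setting $t=0$ yields $u(x,0) = x(v-s)/(sv)$.

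The main obstacle is the rigorous justification of the step that eliminates the constant of integration: the PDE \eqref{eq:pde} determines $h_t\,u$ only up to an additive function of $t$ (equivalently, $u$ up to a term proportional to $1/h_t$), and one must argue that the relevant solution is the one with vanishing probability flux at infinity — the MMSE-related score singled out in Johnson's framework \cite{Johnson2018} — and that the candidate flux $\tfrac{(1-t)(v-s)}{s\,v(t)}\,x\,\gvtalphas(x)$ genuinely has this decay. This is exactly where the $O(|x|^{-\alpha})$ tail bound and the standard uniform bounds on $t$- and $x$-derivatives of stable densities (cf. the regularity discussion under Definition~\ref{def:hcal}) are used; crucially none of it requires $\alpha>1$, so the formula is valid across the full range $\alpha\in(0,2]$, including the Cauchy and Gaussian endpoints. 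As a cross-check one can substitute the closed-form $u$ back into \eqref{eq:pde} and verify the identity directly, which reduces to the scaling identity above.
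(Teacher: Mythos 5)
Your proof is correct, but it takes a genuinely different route from the paper. The paper's proof is probabilistic: it invokes Johnson's definition \(u(x,t)=\rhoM_{X,t}(x)+x/s\) together with his Lemma 3.3, using the linear-regression (covariation) property of jointly S\(\alpha\)S vectors to conclude \(\rhoM_{X,t}(x)=-x/v(t)\) directly — this is also how the paper handles \(\alpha\le 1\), where the conditional expectation exists only in the covariation sense. You instead work analytically from the PDE: you substitute \(h_t=\gvtalphas\), convert \(\partial_t h_t\) into \(\partial_v\gvalpha\) via the chain rule, rewrite the scale derivative as the spatial divergence \(\partial_v\gvalpha=-\tfrac{1}{\alpha v}\partial_x[x\,\gvalpha]\), and integrate in \(x\). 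That scaling identity is precisely Eq.~(A.2) of Appendix~\ref{app:consistency_proof}, so your route has the pleasant side effect of unifying the lemma with the machinery already used in the consistency proof, and your characteristic-function derivation of it is a clean alternative to the paper's direct chain-rule computation. The trade-off is the one you correctly flag: the PDE determines \(h_t u\) only up to an additive function of \(t\), so you must separately argue that the intended \(u\) is the zero-flux solution. That normalisation is not supplied by the PDE itself but by Johnson's MMSE definition of \(u\), so your argument still leans on his framework at exactly one point — just a different point from where the paper leans on it (flux normalisation versus linearity of the regression). Since the candidate flux \(\tfrac{(1-t)(v-s)}{s\,v(t)}\,x\,\gvtalphas(x)=O(|x|^{-\alpha})\) does vanish at infinity and the zero-flux solution is unique, your identification is sound; the algebra, including \((1-t)(v-s)=v(t)-s\), checks out.
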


\begin{proof}
Following Johnson \cite{Johnson2018}, the score is \(u(x, t) = \rhoM_{X,t}(x) + x/s\), where \(\rhoM_{X,t}(x) = -\E[t^{1/\alpha} \Zsalpha | X_t = x] / (st)\) is related to the Minimum Mean Square Error (MMSE) estimate of the scaled noise component \(t^{1/\alpha}\Zsalpha\) given \(X_t\). Johnson's Lemma 3.3 \cite{Johnson2018} utilises the structure of jointly S\(\alpha\)S variables. Specifically, even if marginal first moments are infinite (i.e., for \(\alpha \le 1\)), the conditional expectation exists in the sense of linear regression, \(\mathbb{E}[t^{1/\alpha}\Zsalpha | X_t = x] = c \cdot x\) for some constant \(c\), due to the linear regression property of jointly S\(\alpha\)S vectors (see \cite[Sec 2.8]{samorodnitsky}). In this specific interpolation context, this linear relationship yields \(\rhoM_{X,t}(x) = -x/v(t)\). Therefore, \(u(x, t) = -x/v(t) + x/s\).     Evaluating at \(t=0\) gives \(u(x, 0) = x(1/s - 1/v) = x(v-s)/(sv)\).
\end{proof}

\begin{remark}[On “conditional expectation” for \(\alpha\le1\)]
Even though for \(\alpha\le1\) the classical \(\mathbb{E}[\,\Zsalpha\mid X_t=x]\) may not exist in \(L^1\), the quantity \(\rhoM_{X,t}(x)\) is understood as the stable‑law linear‑regression (covariation) estimate (see Johnson 2018).  This projection is well‑defined for jointly S\(\alpha\)S vectors and is what enters the FdBR framework.
\end{remark}

\begin{remark}[Specificity of the Linear Score]
The derived linear form of the score function, \(u(x, t) = x(1/s - 1/v(t))\), is specific to the case where the initial distribution \(h_0\) is also S\(\alpha\)S. This result relies on Lemma 3.3 in \cite{Johnson2018}, which leverages the unique properties of jointly S\(\alpha\)S variables within the interpolation structure. For a general initial density \(h \in \hcal\), the score function \(u(x, t)\), and particularly \(u(x, 0)\), would typically be non-linear and depend explicitly on the form of \(h\).
\end{remark}

\section{\texorpdfstring{Properties of MFI for S\(\alpha\)S Densities}{Properties of MFI for S-alpha-S Densities}}

\subsection{Chain Rule Formulation} \label{sec:chain_rule}
When the initial density is S\(\alpha\)S, i.e., \(h_0 = \gvalpha\),
the stability property ensures the interpolated density remains
within the same family: \(h_t = g_{v(t)}^{(\alpha)}\), where the
scale parameter evolves linearly as \(v(t) = v(1-t) + st\).
Let \(D(v')\) denote the relative entropy
\(D(g_{v'}^{(\alpha)} \| \gsalpha)\). As established in the preamble
of Appendix~\ref{app:consistency_proof}, \(D(v')\) is differentiable
with respect to the scale parameter \(v'>0\). Since \(v(t)\) is a
smooth function of \(t\), the chain rule applies to the time
derivative of the relative entropy along the path:
\( \frac{d}{dt}D(h_t \| \gsalpha) = \frac{d}{dt} D(v(t)) \).
Specifically,
\[ \frac{d}{dt} D(v(t)) = D'(v(t)) \frac{d v(t)}{dt} = D'(v(t)) (s - v). \]

\medskip 
\noindent\textbf{Important Limitation.} This chain-rule expression
holds \emph{only} because the interpolation starts from
\(h_0 = g_v^{(\alpha)}\), ensuring that \(h_t=g^{(\alpha)}_{v(t)}\)
remains within the S\(\alpha\)S family throughout the evolution
\(t \in [0, 1)\). If \(h_0\) were an arbitrary density in
\(\mathcal{H}_\alpha\), then \(h_t\) would generally \emph{not}
be an S\(\alpha\)S density, the map
\(t\mapsto D(h_t\|g_s)\) could not be written as a composition
\(D(v(t))\), and the simple chain-rule factorisation
\(D'(v(t))v'(t)\) would no longer be applicable.
\medskip 

Evaluating the valid chain rule expression at \(t=0\) gives
\(\frac{d}{dt} D(h_t \| \gsalpha) \big|_{t=0} = D'(v) (s - v)\).
Substituting this initial rate into the MFI definition
(Eq.~\ref{eq:mfi_def}):
\begin{equation} \label{eq:mfi_exact}
\Malpha(\gvalpha \| \gsalpha) = -\frac{\alpha}{s} [D'(v) (s - v)]
= \frac{\alpha D'(v) (v - s)}{s}.
\end{equation}
This provides the Mixed Fractional Information, for the specific case
of an S\(\alpha\)S initial density compared to an S\(\alpha\)S target,
directly in terms of the derivative of their relative entropy with
respect to the scale parameter.

\subsection{Consistency of Formulations} \label{sec:consistency}
We now establish that the chain rule form (Eq. \ref{eq:mfi_exact}) and the integral form (Eq. \ref{eq:mfi_integral} with the derived score \(u(x, 0)\)) yield the same value. Equating the chain rule form (\(\Malpha^{\text{chain}}\)) with the integral form (\(\Malpha^{\text{integral}}\)) using the derived score \(u(x, 0)\) from Lemma \ref{lem:score} requires establishing the following identity. Substituting \(u(x, 0) = \frac{x (v - s)}{sv}\) from Lemma \ref{lem:score} into Eq. \ref{eq:mfi_integral} with \(h=\gvalpha\):
\[ \Malpha^{\text{integral}} = \int_{-\infty}^{\infty} \gvalpha(x) \cdot \frac{x (v - s)}{sv} \cdot (\pF_{\gvalpha}(x) - \pF_{\gsalpha}(x)) \, \dd x = \frac{v - s}{sv} \int_{-\infty}^{\infty} x \gvalpha(x) (\pF_{\gvalpha}(x) - \pF_{\gsalpha}(x)) \, \dd x. \]
Equating this to the chain rule form, \(\Malpha^{\text{chain}} = \frac{\alpha D'(v) (v - s)}{s}\):
\[ \frac{\alpha D'(v) (v - s)}{s} = \frac{v - s}{sv} \int_{-\infty}^{\infty} x \gvalpha(x) (\pF_{\gvalpha}(x) - \pF_{\gsalpha}(x)) \, \dd x \]
Assuming \(v \neq s\), we can divide by \((v-s)/s\):
\[ \alpha D'(v) = \frac{1}{v} \int_{-\infty}^{\infty} x \gvalpha(x) (\pF_{\gvalpha}(x) - \pF_{\gsalpha}(x)) \, \dd x \]
\[ \implies D'(v) = \frac{1}{\alpha v} \int_{-\infty}^{\infty} x \gvalpha(x) (\pF_{\gvalpha}(x) - \pF_{\gsalpha}(x)) \, \dd x. \]
This identity holds trivially if \(v=s\) since both sides are zero (\(D'(s)=0\) as established in Theorem \ref{thm:pos} proof, and \(\pF_{\gvalpha}=\pF_{\gsalpha}\)).

\begin{proposition}[Consistency Identity] \label{prop:consistency_id}
For S\(\alpha\)S densities \(\gvalpha\) and \(\gsalpha\) with \(\alpha \in (0, 2]\) and \(v, s > 0\), the following identity holds:
\[ D'(v) = \frac{1}{\alpha v} \int_{-\infty}^{\infty} x \gvalpha(x) (\pF_{\gvalpha}(x) - \pF_{\gsalpha}(x)) \, \dd x, \]
where \(D(v) = D(\gvalpha \| \gsalpha)\) and \(\pF_f = f'/f\) (specifically \(\pF_{\gvalpha} = g_v^{(\alpha)\prime} / \gvalpha\) and \(\pF_{\gsalpha} = g_s^{(\alpha)\prime} / \gsalpha\)).
\end{proposition}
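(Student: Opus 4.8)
The plan is to differentiate $D(v)=\int \gvalpha(x)\log\bigl(\gvalpha(x)/\gsalpha(x)\bigr)\,\dd x$ directly under the integral sign, convert the $v$-derivative of the density into an $x$-derivative using the self-similarity of the S$\alpha$S family, and then perform a single integration by parts, which produces the Fisher score difference on the nose.

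\textbf{Step 1 (differentiate under the integral).} Using the dominated-convergence justification already recorded in the preamble of Appendix~\ref{app:consistency_proof}, write
\[
  D'(v)=\int_{-\infty}^{\infty}\frac{\partial\gvalpha}{\partial v}(x)\,\log\frac{\gvalpha(x)}{\gsalpha(x)}\,\dd x+\int_{-\infty}^{\infty}\frac{\partial\gvalpha}{\partial v}(x)\,\dd x .
\]
The second integral vanishes: since $\int\gvalpha=1$ for every $v$, differentiating this constant (again legitimate by the same domination) gives $\int\partial_v\gvalpha=0$. Hence $D'(v)=\int\partial_v\gvalpha\,\log(\gvalpha/\gsalpha)\,\dd x$.

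\textbf{Step 2 (scaling identity).} Differentiating the self-similar representation $\gvalpha(x)=v^{-1/\alpha}g_1^{(\alpha)}(v^{-1/\alpha}x)$ in $v$, and re-expressing $g_1^{(\alpha)}$ and $g_1^{(\alpha)\prime}$ back in terms of $\gvalpha$ and $g_v^{(\alpha)\prime}$, yields the clean first-order identity
\[
  \frac{\partial\gvalpha}{\partial v}(x)=-\frac{1}{\alpha v}\,\frac{\partial}{\partial x}\bigl[x\,\gvalpha(x)\bigr].
\]
(As a consistency check, for $\alpha=1$ both sides equal $\pi^{-1}(x^2-v^2)/(v^2+x^2)^2$.) Substituting and integrating by parts,
\[
  D'(v)=-\frac{1}{\alpha v}\int_{-\infty}^{\infty}\frac{\partial}{\partial x}\bigl[x\,\gvalpha(x)\bigr]\log\frac{\gvalpha(x)}{\gsalpha(x)}\,\dd x=\frac{1}{\alpha v}\int_{-\infty}^{\infty}x\,\gvalpha(x)\,\frac{\partial}{\partial x}\log\frac{\gvalpha(x)}{\gsalpha(x)}\,\dd x,
\]
where the boundary term $\bigl[x\,\gvalpha(x)\log(\gvalpha(x)/\gsalpha(x))\bigr]_{-\infty}^{\infty}$ vanishes because $x\,\gvalpha(x)=O(|x|^{-\alpha})\to0$ while $\log(\gvalpha(x)/\gsalpha(x))\to\log(v/s)$ is finite (both from the power-law tail $\gsalpha(x)\sim c_\alpha|x|^{-1-\alpha}$, with even faster decay in the Gaussian case $\alpha=2$). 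Since $\partial_x\log(\gvalpha/\gsalpha)=\pF_{\gvalpha}-\pF_{\gsalpha}$, this is precisely the claimed identity; absolute convergence of the final integral follows because the leading $-(1+\alpha)/x$ terms of $\pF_{\gvalpha}$ and $\pF_{\gsalpha}$ cancel, so $\pF_{\gvalpha}-\pF_{\gsalpha}$ decays faster than $|x|^{-1}$ and the integrand is integrable at infinity. The case $v=s$ is trivial: both sides are $0$.

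\textbf{Main obstacle.} The algebra is short; the genuine work is the analytic bookkeeping, namely (i) exhibiting a dominating function, uniform for $v$ in compact subsets of $(0,\infty)$, that legitimises both interchanges of $\partial_v$ with $\int$ in Step~1, and (ii) the tail and smoothness estimates on $\gvalpha$, $g_v^{(\alpha)\prime}$, and $\delta\pF_{v,s}$ needed to annihilate the boundary term and guarantee absolute convergence. These rest on the standard asymptotic-series expansions for S$\alpha$S densities, and carrying them through \emph{uniformly} across the whole range $\alpha\in(0,2]$ — especially the small-$\alpha$ regime, where moments behave worst — is the delicate point of the rigorous proof.
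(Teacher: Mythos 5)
Your proposal is correct and follows essentially the same route as the paper's Appendix~\ref{app:consistency_proof}: differentiate under the integral, use the scaling identity \(\partial_v \gvalpha = -\tfrac{1}{\alpha v}(\gvalpha + x g_v^{(\alpha)\prime}) = -\tfrac{1}{\alpha v}\partial_x[x\gvalpha]\), and integrate by parts with the same vanishing boundary term. Your version is slightly streamlined --- writing the scale derivative in divergence form and discarding \(\int \partial_v \gvalpha\,\dd x = 0\) up front lets you skip the paper's intermediate bookkeeping (the \(\int x g_v'\,\dd x = -1\) step and the cancellation of the \(D(v)\) terms) --- but the substance is identical.
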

\begin{proof}
See Appendix \ref{app:consistency_proof}. For a rigorous numerical validation of this identity, see Appendix \ref{app:numerical_validation}.
\end{proof}
This proposition rigorously confirms the equivalence of the two calculation methods for \(\Malpha(\gvalpha \| \gsalpha)\).

\begin{remark}[Scope of the Consistency Identity]
\label{rem:scope_identity}
The identity established in Proposition~\ref{prop:consistency_id},
and thus the proven equivalence between the chain-rule formulation
of MFI (Eq.~\ref{eq:mfi_exact}) and the integral formulation
(Eq.~\ref{eq:mfi_integral} combined with the score from
Lemma~\ref{lem:score}), crucially \emph{requires} the initial
density to be S\(\alpha\)S, i.e., \(h_0=g^{(\alpha)}_{v}\).
This specific choice guarantees that the interpolation path remains
within the S\(\alpha\)S family (\(h_t=g^{(\alpha)}_{v(t)}\)),
that the initial score function takes the linear form
\(u(x,0)=x(v-s)/(sv)\), and that the time derivative of the
relative entropy follows the chain rule
\(\frac{\dd}{\dd t}D(h_t\|\gsalpha)=D'(v(t))\,v'(t)\).
For a general initial density \(h\in\mathcal{H}_\alpha\), the
interpolated density \(h_t\) typically leaves the S\(\alpha\)S
family, the initial score \(u(x,0)\) is generally non-linear,
and the simple chain-rule factorisation fails. Consequently,
the two MFI calculation methods derived here are not guaranteed
to agree in the general case.
\end{remark}

\subsection{Positivity} \label{sec:positivity}
\begin{theorem} \label{thm:pos}
For any \(\alpha \in (0, 2]\) and scale parameters \(v, s > 0\), the Mixed Fractional Information satisfies
\[ \Malpha(\gvalpha \| \gsalpha) \geq 0, \]
with equality holding if and only if \(v = s\).
\end{theorem}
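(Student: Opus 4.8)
The plan is to read off the sign of $\Malpha$ from the closed chain-rule expression of Eq.~\ref{eq:mfi_exact}, namely $\Malpha(\gvalpha\|\gsalpha)=\frac{\alpha}{s}\,D'(v)(v-s)$ with $D(v):=D(\gvalpha\|\gsalpha)$. Since $\alpha,s>0$, the theorem is equivalent to showing $D'(v)(v-s)\ge0$, with equality precisely when $v=s$. As a preliminary reduction I would use the scaling identity $\gvalpha(x)=v^{-1/\alpha}g_1^{(\alpha)}(v^{-1/\alpha}x)$ together with the substitution $x\mapsto v^{1/\alpha}x$ to observe that $D(v)$ depends on $v,s$ only through the ratio $v/s$; writing $\mathcal D(\rho):=D(g_\rho^{(\alpha)}\|g_1^{(\alpha)})$, it then suffices to prove $(\rho-1)\mathcal D'(\rho)\ge0$, with equality iff $\rho=1$.

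For the inequality $\Malpha\ge0$ I would appeal to the data-processing inequality for relative entropy. The interpolation map $X_0\mapsto X_t$ of Eq.~\ref{eq:interp} acts on laws as a Markov kernel $K_t$ which, by the scaling and stability properties, sends $\gvalpha$ to $g_{v(t)}^{(\alpha)}$ with $v(t)=(1-t)v+ts$; crucially $\gsalpha$ is a fixed point of $K_t$, since $v(t)=s$ whenever the initial scale equals $s$. Hence, for all $t\in[0,1)$,
\[
  D(v(t))=D(K_t\gvalpha\|K_t\gsalpha)\le D(\gvalpha\|\gsalpha)=D(v),
\]
so the difference quotient $(D(v(t))-D(v))/t$ is $\le0$ for $t>0$; letting $t\downarrow0$ and using that $t\mapsto D(v(t))$ is differentiable with right-derivative $D'(v)(s-v)$ at $t=0$ (as in Section~\ref{sec:chain_rule}) gives $D'(v)(s-v)\le0$, i.e.\ $\Malpha\ge0$. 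Running the same argument while moving an arbitrary scale towards $s$ shows, as a by-product, that $\mathcal D$ is non-increasing on $(0,1]$ and non-decreasing on $[1,\infty)$, with $\mathcal D(\rho)>0$ for $\rho\neq1$ by the strict Gibbs inequality.

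For the equality clause, $v=s$ gives $\Malpha=\frac\alpha s D'(s)\cdot0=0$ at once, so the real task is $\Malpha>0$ when $v\neq s$. Here I would combine the consistency identity of Proposition~\ref{prop:consistency_id}, $\alpha v\,D'(v)=\int x\,\gvalpha(x)\,\delta\pF_{v,s}(x)\,\dd x$, with the elementary relation $\int x\,\gvalpha(x)\,\pF_{\gvalpha}(x)\,\dd x=\int x\,g_v^{(\alpha)\prime}(x)\,\dd x=-1$, obtained by integration by parts (the boundary term vanishes because $|x|\gvalpha(x)=O(|x|^{-\alpha})\to0$ for $\alpha>0$). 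Specialising to $s=1$ and writing $\pF_1:=g_1^{(\alpha)\prime}/g_1^{(\alpha)}$, $\psi(x):=-x\,\pF_1(x)$, this becomes $\alpha v\,D'(v)=\phi(v)-1$, where
\[
  \phi(v):=-\int x\,\gvalpha(x)\,\pF_1(x)\,\dd x=\E_{\gvalpha}[\psi(X)],\qquad \phi(1)=1,
\]
and $\psi\ge0$ because S$\alpha$S densities are unimodal (so $x$ and $g_1^{(\alpha)\prime}(x)$ have opposite signs). Using the coupling $X\stackrel{d}{=}v^{1/\alpha}X'$ with $X'\sim g_1^{(\alpha)}$ and the evenness of $\psi$, we get $\phi(v)=\E[\psi(v^{1/\alpha}|X'|)]$. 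The decisive input, which I would isolate as a lemma, is that $\psi$ is non-decreasing on $[0,\infty)$ — equivalently that $x\mapsto x\,\pF_1(x)$ is non-increasing there. Granting it, for $v>1$ one has $\psi(v^{1/\alpha}|X'|)\ge\psi(|X'|)$ pointwise, with strict inequality on an event of positive $\gvalpha$-probability (since $\psi$ is continuous and non-constant, running from $\psi(0)=0$), whence $\phi(v)>1$ and $D'(v)>0$; the case $v<1$ is symmetric and gives $D'(v)<0$. In either case $(v-1)D'(v)>0$, so $\Malpha>0$.

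I expect the monotonicity lemma for $\psi$ to be the main obstacle, as it is the one step that genuinely uses the fine structure of S$\alpha$S densities rather than soft properties of relative entropy. It is immediate for $\alpha=2$ ($\psi(x)=x^2/2$) and explicit for $\alpha=1$ ($\psi(x)=2x^2/(1+x^2)$, giving $\mathcal D'(\rho)=(\rho-1)/(\rho(\rho+1))$, consistent with the Cauchy formulas of Section~\ref{sec:cauchy}). For general $\alpha\in(0,2)$ I would use the representation of a symmetric $\alpha$-stable law as a Gaussian scale mixture, $g_1^{(\alpha)}=\E_A[\varphi_A]$ with $\varphi_a$ a centred Gaussian of variance proportional to $a$ and $A>0$ positive $(\alpha/2)$-stable: then $\pF_1(x)=-x\,\E[A^{-1}\mid x]$ for the posterior $\propto\varphi_a(x)\,\dd P_A(a)$, so $x\,\pF_1(x)=-x^2\,\E[A^{-1}\mid x]$, and in the variable $w=x^2$ this posterior is a one-parameter exponential family (natural parameter $-w/2$, sufficient statistic $A^{-1}$); the claim then reduces to an inequality of the shape $\E[A^{-1}\mid w]\ge\frac{w}{2}\operatorname{Var}[A^{-1}\mid w]$. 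An alternative route, should that computation prove unwieldy, is to establish convexity of $\sigma\mapsto D(g_{v(\sigma)}^{(\alpha)}\|\gsalpha)$ along the reparametrised interpolation semigroup $v(\sigma)=s+e^{-\sigma}(v-s)$: a convex, non-increasing function whose right-derivative vanishes at $\sigma=0$ must be constant, contradicting $D\to0$ as $\sigma\to\infty$ whenever $v\neq s$.
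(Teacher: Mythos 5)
Your argument splits into two parts of very different status. The non-negativity half is correct and takes a genuinely different route from the paper. The paper reads the sign of \(D'(v)(v-s)\) off the fact that \(v'\mapsto D(v')\) is differentiable with a unique global minimum at \(v'=s\) (strict Gibbs inequality plus injectivity of the scale parametrisation). You instead observe that the interpolation acts as a Markov kernel \(K_t\) fixing \(\gsalpha\) and sending \(\gvalpha\) to \(g^{(\alpha)}_{v(t)}\), so data processing gives \(D(v(t))\le D(v)\), and the one-sided difference quotient at \(t=0\) yields \(D'(v)(s-v)\le 0\) directly. This is clean, uses only soft properties of relative entropy, and as a bonus gives monotonicity of \(D\) on each side of \(s\) — something the paper's argument does not deliver. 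For this half your approach is, if anything, more robust than the paper's.

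The gap is in the strictness clause. Your reduction \(\alpha v\,D'(v)=\phi(v)-1\) with \(\phi(v)=\E_{g_v}[\psi(X)]\), \(\psi(x)=-x\,\pF_1(x)\), is correct (the ingredient \(\E_{g_v}[X\pF_v(X)]=-1\) is exactly identity (A.4) in the appendix, and \(\psi\ge0\) follows from symmetric unimodality), but everything then hinges on the lemma that \(\psi\) is non-decreasing on \([0,\infty)\) for every \(\alpha\in(0,2)\). You prove this only for \(\alpha\in\{1,2\}\); the Gaussian-scale-mixture reduction you sketch, \(\E[A^{-1}\mid w]\ge\tfrac{w}{2}\operatorname{Var}[A^{-1}\mid w]\), is itself an unproved and nontrivial inequality, as is the convexity alternative. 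So the ``only if'' direction is not established for general \(\alpha\). Note that your DPI monotonicity cannot substitute for it: a monotone differentiable function may have vanishing derivative at isolated points, so \(D'(v)\ne0\) for \(v\ne s\) genuinely needs extra input. The paper obtains strictness far more cheaply, asserting that a differentiable function with a unique global minimum at \(s\) has \(D'<0\) to the left and \(D'>0\) to the right — a step which, stated as a general calculus fact, is itself not airtight (a unique global minimum does not force a sign-definite derivative away from it), so your instinct that something quantitative about S\(\alpha\)S densities is required here is reasonable; the problem is simply that the quantitative lemma you need remains unproven.
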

\begin{proof}
The proof proceeds from the chain rule formulation derived in Section~\ref{sec:chain_rule} (Eq.~\ref{eq:mfi_exact}), which is valid for the S\(\alpha\)S-to-S\(\alpha\)S comparison:
\[ \Malpha(\gvalpha \| \gsalpha) = \frac{\alpha D'(v) (v - s)}{s}. \]
Here, \(D(v') := D(g_{v'}^{(\alpha)} \| \gsalpha)\) is the relative entropy (KL divergence) between two S\(\alpha\)S densities with scales \(v'\) and \(s\), respectively, and \(D'(v)\) is its derivative with respect to \(v'\) evaluated at \(v' = v\).

By the fundamental properties of relative entropy (Gibbs' inequality), \(D(v') \ge 0\) for all \(v' > 0\) \cite{CoverThomas1991}. Furthermore, equality \(D(v')=0\) holds if and only if the two probability densities are identical almost everywhere, i.e., \(g_{v'}^{(\alpha)} = g_s^{(\alpha)}\) a.e.

For the family of symmetric \(\alpha\)-stable distributions considered in this paper (zero location, zero skewness), the mapping from the positive scale parameter (\(v'\) or \(s\)) to the corresponding probability density function (\(g_{v'}^{(\alpha)}\) or \(g_s^{(\alpha)}\)) is injective. This means that distinct positive scale parameters necessarily yield distinct probability density functions. Therefore, the condition \(g_{v'}^{(\alpha)} = g_s^{(\alpha)}\) a.e. is met if and only if the scale parameters are identical: \(v' = s\) \cite[cf. Property 1.2.3]{samorodnitsky}\cite{zolotarev3}.

This establishes that the function \(f(v') = D(v')\) has a unique global minimum at \(v' = s\), where \(D(s) = 0\). As justified in Appendix~\ref{app:consistency_proof} (Preamble based on properties from \cite{samorodnitsky, zolotarev3}), the relative entropy \(D(v')\) is differentiable with respect to \(v'\) for \(v'>0\). Standard calculus then implies that for a differentiable function with a unique global minimum at \(s\), its derivative must be zero at the minimum (\(D'(s) = 0\)), negative to the left (\(D'(v) < 0\) for \(v < s\)), and positive to the right (\(D'(v) > 0\) for \(v > s\)).

Consequently, the product \(D'(v)(v-s)\) is strictly positive whenever \(v \ne s\) (as both factors are negative if \(v < s\), and both are positive if \(v > s\)) and is zero only when \(v = s\). Since the stability index \(\alpha\) and the target scale parameter \(s\) are both strictly positive (\(\alpha \in (0, 2]\), \(s > 0\)), the overall factor \(\alpha/s\) in the MFI expression is positive.

We conclude that \(\Malpha(\gvalpha \| \gsalpha) = (\alpha/s) [D'(v)(v-s)] \geq 0\), with equality holding if and only if \(v=s\).
\end{proof}

\begin{remark}[Note on Convexity]
While the convexity of \(D(v) = D(g_v^{(\alpha)} \| g_s^{(\alpha)})\) as a function of \(v\) is plausible (and holds for \(\alpha=1, 2\)), the positivity proof presented here relies only on the fundamental property that \(D(v)\) has a unique minimum at \(v=s\) and is differentiable. Proving convexity for general \(\alpha \in (0, 2)\) remains an open question but is not necessary for establishing the non-negativity of MFI.
\end{remark}

\subsection{\texorpdfstring{Cauchy Case (\(\alpha = 1\))}{Cauchy Case (alpha = 1)}} \label{sec:cauchy}
For the Cauchy distribution (\(\alpha = 1\)), the relative entropy is known \cite[Theorem 1]{ChyzakNielsen2019}:
\[ D(\gvalphaOne \| \gsalphaOne) = \log \frac{(v + s)^2}{4vs}. \]
The derivative with respect to \(v\) is:
\[ D'(v) = \frac{d}{dv} \left[ 2\log(v+s) - \log(v) - \log(4s) \right] = \frac{2}{v+s} - \frac{1}{v} = \frac{v - s}{v(v + s)}. \]
Substituting \(\alpha=1\) and this \(D'(v)\) into the MFI formula (Eq. \ref{eq:mfi_exact}):
\[ M_1(\gvalphaOne \| \gsalphaOne) = \frac{1 \cdot D'(v) (v - s)}{s} = \frac{1}{s} \left( \frac{v - s}{v(v + s)} \right) (v - s) = \frac{(v - s)^2}{sv(v + s)}. \]
This provides an explicit, closed-form expression for MFI in the Cauchy case, allowing for direct analysis in this setting.
This closed-form expression aligns with the general consistency identity proven in Appendix \ref{app:consistency_proof}, which holds for all \(\alpha \in (0, 2]\), including the Cauchy case (\(\alpha=1\)).

\section{Discussion} \label{sec:discussion}

\subsection{MFI as a Dissipation Measure}
The MFI, \(\Malpha(\gvalpha \| \gsalpha) = \frac{\alpha D'(v) (v - s)}{s}\), measures the initial rate of relative entropy dissipation as \(\gvalpha\) evolves toward \(\gsalpha\) under the specific stable interpolation (Eq. \ref{eq:interp}), scaled appropriately by \(\alpha/s\). Unlike classical Fisher information, which diverges for \(\alpha < 2\), MFI remains finite and provides a well-defined measure reflecting the information dynamics inherent to stable laws within this specific process.

\subsection{Significance of Consistency}
The proven equivalence between the chain rule form (\(\propto D'(v)(v-s)\)) and the integral form (\(\propto \mathbb{E}_{g_v}[u(x,0)(\pF_v - \pF_s)]\)) is crucial. It validates the internal mathematical coherence of the MFI framework when comparing S\(\alpha\)S distributions. It confirms that the derived score function \(u(x,0) = x(v-s)/(sv)\) correctly links the local Fisher score differences to the global change in relative entropy \(D'(v)\) via the identity in Proposition \ref{prop:consistency_id}. This identity itself, which can be written as \(\alpha v D'(v) = \mathbb{E}_{g_v}[X (\pF_v - \pF_s)]\), represents a potentially novel contribution to the analytical understanding of information measures for stable laws, directly connecting entropy derivatives to score functions.

\subsection{Link to MMSE Estimation and I-MMSE Analogy} \label{sec:discussion_mfi_mmse}
The score function \(u(x, t) = \rhoM_{X,t}(x) + x/s\), central to the Fractional de Bruijn Identity (FdBR) and the integral formulation of MFI, explicitly incorporates Johnson’s MMSE-related score \(\rhoM_{X,t}(x)\). This term quantifies aspects of the error in estimating the scaled stable noise component \(t^{1/\alpha}\Zsalpha\) given the interpolated process \(X_t=x\) \cite{Johnson2018}, even when classical moments diverge.

The proven consistency identity (Proposition~\ref{prop:consistency_id}) for the S\(\alpha\)S-to-S\(\alpha\)S interpolation at \(t=0\) strongly reinforces this connection between information dissipation and estimation. The identity can be suggestively rearranged using the score difference notation (\(\delta\pF_{v,s} = \pF_v - \pF_s\)) as:
\[
  \alpha\,v\,D'(v)\;=\;
  \E_{g_v}\!\bigl[X\,\delta\pF_{v,s}(X)\bigr].
\]
This structure echoes the celebrated I-MMSE relation for Gaussian channels \cite{guo}, typically written as
\(
  \frac{\dd}{\dd\mathrm{snr}} I(X; \sqrt{\mathrm{snr}}Z+X)
  = \frac{1}{2} \mathrm{mmse}(\mathrm{snr})
\), where \(Z\) is standard Gaussian noise.
In the Gaussian case, the derivative of mutual information (\(I\), an information measure) with respect to the signal-to-noise ratio (\(\mathrm{snr}\)) equals half the minimum mean-square error (\(\mathrm{mmse}\), an estimation-theoretic quantity).

In our S\(\alpha\)S context, \(D(v) = D(g_v^{(\alpha)} \| g_s^{(\alpha)})\) acts as an \emph{information potential} (relative entropy) dependent on the scale parameter \(v\). Its derivative \(D'(v)\), scaled by \(\alpha v\), is equated to the expectation \(\E_{g_v}[X\,\delta\pF_{v,s}(X)]\). This expectation involves the score difference \(\delta\pF_{v,s}\) and is directly related through the MFI framework (specifically Lemma~\ref{lem:score} and Eq.~\ref{eq:mfi_integral}) to the MMSE-related score \(u(x,0)\), thus serving as an estimation-theoretic proxy reflecting the strength of the dynamics driving \(g_v^{(\alpha)}\) towards \(g_s^{(\alpha)}\).

This structural resonance, arising naturally from the FdBR framework and solidified by the consistency result, suggests that MFI may serve as a valuable tool for investigating estimation-theoretic properties in stable noise environments. It motivates exploring a potential "fractional I-MMSE" framework, further connecting information dynamics and estimation theory in heavy-tailed settings and potentially leading to new performance bounds or insights.

\subsection{Functional Inequalities and Concentration} \label{sec:discussion_lsi}
A key direction for future work is establishing functional inequalities relating the relative entropy \(D(\gvalpha \| \gsalpha)\) to the Mixed Fractional Information \(\Malpha(\gvalpha \| \gsalpha)\). Such inequalities are crucial for understanding concentration properties and convergence rates within the stable framework, analogous to how classical Log-Sobolev (LSI) or Poincaré inequalities function in the Gaussian setting.

An initial investigation might explore a direct LSI analogue, particularly for the Cauchy case (\(\alpha=1\)) where closed forms exist. One could conjecture a relationship like \(D(\gvalphaOne \| \gsalphaOne) \leq C \cdot M_1(\gvalphaOne \| \gsalphaOne)\) for some universal constant \(C\). However, straightforward analysis reveals this simple, unweighted LSI does not hold. Using the explicit expressions derived in Section~\ref{sec:cauchy}, the ratio
\[
 \frac{D(\gvalphaOne \| \gsalphaOne)}{M_1(\gvalphaOne \| \gsalphaOne)} =
 \frac{sv(v+s)}{(v-s)^2} \log \frac{(v+s)^2}{4vs}
\]
is unbounded. For instance, consider the limit as the ratio \(v/s \to \infty\). In this regime, \(D \sim \log(v/s)\) grows logarithmically, while \(M_1 = (v-s)^2 / (sv(v+s)) \sim v^2 / (s v^2) = 1/s\) approaches a constant (assuming fixed \(s\)). Consequently, the ratio \(D/M_1\) diverges as \(v/s \to \infty\) (and similarly as \(v/s \to 0\)).

\medskip
\noindent
This failure aligns with the established fact that symmetric \(\alpha\)-stable laws for \(\alpha \in (0, 2)\) do not possess a spectral gap in \(L^2\) spaces weighted by the density itself, unlike the Gaussian case (\(\alpha=2\)) which corresponds to the Ornstein-Uhlenbeck process (see, e.g.,~\cite{GentilGuillinMiclo2005} and related works on functional inequalities for Lévy processes). The absence of such a gap generally precludes standard (unweighted) Poincaré or log-Sobolev inequalities, thus motivating the investigation of weighted functional inequalities as discussed previously.
\medskip

Indeed, research suggests that appropriate inequalities for heavy-tailed distributions might need to be weighted or otherwise generalised. For instance, weighted Poincaré and LSI inequalities have been successfully established for Cauchy measures, albeit using specific weight functions chosen to counteract the heavy tails \cite{BobkovLedoux2009}. Connections between suitable weighted inequalities and concepts like Stein kernels have also been explored for distributions with heavy tails \cite{Saumard2019}.

While alternative approaches using Fractional Fisher Information have yielded candidate functional inequalities for S\(\alpha\)S laws \cite{Toscani2016}, finding and proving corresponding inequalities (potentially weighted LSI or Poincaré-type) involving MFI remains a significant open problem. Such inequalities would be central to understanding concentration and convergence properties within the MFI framework and would likely need to explicitly account for the \(\alpha\)-dependent structure revealed by the MFI definition and the consistency identity (Proposition~\ref{prop:consistency_id}). The analytical challenge posed by finding appropriate functional inequalities stands in contrast to the successful high-precision verification of the core consistency identity itself, which was robustly confirmed numerically as detailed in Appendix~\ref{app:numerical_validation}.

\subsection{Numerical Confirmation of Consistency} \label{sec:numerical_confirmation}
The analytical proof establishing the consistency between the chain rule and integral formulations of MFI for S\(\alpha\)S inputs (Proposition \ref{prop:consistency_id}) is strongly corroborated by numerical validation. Using a robust computational pipeline detailed in Appendix \ref{app:numerical_validation}---employing adaptive quadrature over infinite domains, stable score computation via log-derivatives, and high-order finite difference methods---we confirmed the identity \(D'(v) = \frac{1}{\alpha v} \mathbb{E}_{g_v}[X (\pF_v - \pF_s)]\) to high precision. Tier 1 validation for Cauchy (\(\alpha=1\)) and Gaussian (\(\alpha=2\) cases achieved machine precision agreement, leveraging known analytic forms. Crucially, Tier 2 validation for the general case (\(\alpha=1.5\)) also yielded excellent agreement (relative error \(< 10^{-6}\)), as shown in Table \ref{tab:numval_tier2}. This demonstrates that the identity holds robustly and is computationally verifiable when appropriate numerical methods are employed, resolving the larger discrepancies observed in initial, less sophisticated numerical tests (which suffered from issues like domain truncation and unstable score estimation). This numerical confirmation solidifies the MFI framework's internal coherence.

\section{Assumptions and Limitations} \label{sec:assumptions}
This work operates under specific assumptions and has certain limitations:
\begin{itemize}
    \item \textbf{Symmetry:} We exclusively consider symmetric \(\alpha\)-stable (S\(\alpha\)S) distributions. Extension to asymmetric stable laws is a potential future direction but introduces additional complexity (e.g., drift terms, modified score functions).
    \item \textbf{Regularity Class \(\hcal\):} The definition of MFI (Def. \ref{def:mfi}) relies on the existence of the derivative of relative entropy at \(t=0\). While S\(\alpha\)S densities possess sufficient regularity \cite{samorodnitsky, zolotarev3}, providing explicit, easily verifiable conditions for a general density \(h\) to belong to \(\hcal\) remains an open challenge, crucial for extending MFI beyond S\(\alpha\)S comparisons.
    \item \textbf{Analytical Assumptions:} The proof of the consistency identity (Appendix \ref{app:consistency_proof}) relies on standard assumptions like the validity of differentiation under the integral sign and the vanishing of boundary terms in integration by parts. These are justified for S\(\alpha\)S densities due to their known smoothness and decay properties but require careful verification when considering more general densities \(h \in \hcal\).
\end{itemize}

\section{Future Directions} \label{sec:future_directions}
This work opens several avenues, many directly addressing or extending the open problems highlighted in \cite{Johnson2018}:
\begin{itemize}\raggedright 
    \item \textbf{Characterising \(\hcal\):} Provide explicit, verifiable conditions for a density \(h\) to belong to the regularity class \(\hcal\).
    \item \textbf{Generalised MFI Properties:} Investigate properties of $\Malpha(h \| \gsalpha)$ for general \(h \in \hcal\), including positivity and potential links to projection identities analogous to those sought for \(\rhoM\) [cf. Johnson (2018), Problem 1, 2].
    \item \textbf{Functional Inequalities:} Find and prove correct LSI or Poincaré-type inequalities involving MFI for \(\alpha \in (0, 2]\). This likely requires considering weighted inequalities \cite{BobkovLedoux2009} or other generalisations suitable for heavy tails, potentially leveraging the structure of the MFI consistency identity. An integral form of the FdBR might be needed [cf. Johnson (2018), Problem 4].
    \item \textbf{Investigate Relationships:} Explore the precise mathematical relationship between MFI, relative Fractional Fisher Information \cite{Toscani2016}, and other generalised information/power measures \cite{SaadFayed2016} for stable laws.
    \item \textbf{Gaussian Limit:} Reconcile the scaling of MFI with classical Fisher information results as \(\alpha \to 2\).
    \item \textbf{Stable EPI:} Investigate analogues of the Entropy Power Inequality for stable convolutions using MFI or related concepts. This requires defining a suitable notion of stable entropy power and may leverage connections to MMSE and the de Bruijn identity, drawing parallels with classical EPI proofs \cite{Rioul2011, CoverThomas1991}.
    \item \textbf{Asymmetric Laws:} Extend the MFI framework to non-symmetric stable distributions [cf. Johnson (2018), Problem 6].
    \item \textbf{Information Geometry:} Explore the geometric interpretation of MFI and the consistency identity \(D'(v) = \frac{1}{\alpha v} \mathbb{E}_{g_v}[X (\pF_v - \pF_s)]\). Could this identity define a gradient flow or induce a Fisher-Rao-like metric structure on spaces of stable densities, potentially connecting to existing work on Cauchy manifolds \cite{Nielsen2020} or tempered stable geometry \cite{KimLee2025}?
    \item \textbf{Fractional I-MMSE Relations:} Further develop the connection highlighted in Section \ref{sec:discussion_mfi_mmse} towards potential fractional analogues of the I-MMSE relation \cite{guo} based on the MFI framework.
\end{itemize}

\section{Conclusion}
In conclusion, we have introduced Mixed Fractional Information (\(\Malpha\)) as a measure derived from entropy dissipation dynamics within Johnson's stable interpolation framework. For the core case comparing two S\(\alpha\)S densities, we established two distinct calculation methods – one based on the relative entropy derivative (\(D'(v)\)) and the other on an integral involving Fisher score differences and an MMSE-related score function (\(u(x,0)\)) – and rigorously proved their mathematical equivalence via the identity \(D'(v) = \frac{1}{\alpha v} \mathbb{E}_{g_v}[X (\pF_v - \pF_s)]\). The proven consistency was further substantiated by robust numerical validation, confirming the identity to high precision across different \(\alpha\) values, as detailed in Appendix \ref{app:numerical_validation}. While this work establishes the MFI framework and its properties for symmetric stable laws under a specific interpolation path, it serves as a foundation for broader investigations into information theory for heavy-tailed systems. Key future directions include the extension of the MFI framework to asymmetric stable laws, the challenging search for appropriate (potentially weighted) functional inequalities (such as Log-Sobolev or Poincaré inequalities) involving MFI, exploring potential connections to information geometry on stable manifolds, and further developing the link to estimation theory, possibly yielding fractional analogues of the I-MMSE relation.

\section*{Author's Note}
{\small
I am a second-year undergraduate in economics at the University of Bristol working independently. This paper belongs to a research programme I am developing across information theory, econometrics and mathematical statistics.

This work was produced primarily through AI systems that I directed and orchestrated. The AI generated the mathematical content, proofs and symbolic derivations based on my research questions and guidance. I have no formal mathematical training but am eager to learn through this process of directing AI-powered mathematical exploration.

My contribution involves designing research directions, evaluating and selecting AI outputs, and ensuring the coherence of the overall research agenda. All previously published work that influenced these results has been cited to the best of my knowledge and research capabilities in my current position. The presentation aims to be pedagogically accessible.
}

\bibliography{bib}

\begin{appendices}

\section{Proof of Consistency Identity (Proposition \ref{prop:consistency_id})} \label{app:consistency_proof}
Goal: Establish the identity for \(\alpha \in (0, 2]\) and \(v, s > 0\):
\[ D'(v) = \frac{1}{\alpha v} \int_{-\infty}^{\infty} x \gvalpha(x) (\pF_{\gvalpha}(x) - \pF_{\gsalpha}(x)) \, \dd x, \]
where \(D(v) = D(\gvalpha \| \gsalpha)\) and \(\pF_f = f'/f\).

\paragraph{Proof Strategy.}
The proof proceeds by differentiating the relative entropy
\(D(v)\) with respect to \(v\) using the Leibniz integral rule
(differentiation under the integral sign). The resulting scale
derivative \(\partial g_v / \partial v\) is then related to the
spatial derivative \(g_v' = \partial g_v / \partial x\) using the
S\(\alpha\)S scaling identity (derived in Step 2, Eq.~\eqref{eq:A.2}).
The final identity emerges after substituting this relation back into
the integral for \(D'(v)\) and simplifying using integration by parts,
leveraging the boundary conditions of \(g_v\) and the normalisation
constraint \(\int g_v dx = 1\) (specifically, its consequence
\(\int x g_v' dx = -1\)).

\medskip 

\textbf{Justification of Assumptions:} This proof relies on key properties of S\(\alpha\)S densities and the KL divergence \(D(v) = D(g_v^{(\alpha)} \| g_s^{(\alpha)})\). Specifically:
\begin{itemize}
    \item \textbf{Smoothness and Differentiability:} The S\(\alpha\)S density \(g_v^{(\alpha)}(x)\) is smooth (C\(\infty\)) with respect to both \(x\) (for \(x \neq 0\) if \(\alpha < 2\)) and the scale parameter \(v > 0\) \cite{samorodnitsky, zolotarev3}. Consequently, the KL divergence \(D(v)\) inherits sufficient smoothness (at least \(C^1\)) for \(v > 0\). This ensures the existence of \(D'(v)\) for all \(v>0\), including at \(v=s\).
    \item \textbf{Differentiation Under the Integral:} The interchange of derivative (\(\frac{d}{dv}\)) and integral (\(\int dx\)) in Step 1 is justified by the Dominated Convergence Theorem for differentiation. As shown in Step 2, the derivative \(\frac{\partial g_v^{(\alpha)}}{\partial v}\) involves \(g_v^{(\alpha)}\) and \(x g_v^{(\alpha)\prime}\), both decaying like \(O(|x|^{-1-\alpha})\) or faster \cite{samorodnitsky}. Since \(\log(g_v^{(\alpha)}/g_s^{(\alpha)}) \to \log(v/s)\) as \(|x| \to \infty\), the partial derivative \(\frac{\partial}{\partial v}[g_v \log(g_v/g_s)] = \frac{\partial g_v}{\partial v} (1 + \log(g_v/g_s))\) also decays like \(O(|x|^{-1-\alpha})\). This decay ensures the existence of an integrable dominating function independent of \(v\) in a neighborhood, justifying the interchange \cite{samorodnitsky, zolotarev3}.
\end{itemize}

---

Step 1: Compute the derivative \(D'(v)\) via Leibniz Integral Rule

Start with the definition of relative entropy:
\[ D(v) = \int_{-\infty}^{\infty} \gvalpha(x) \log \frac{\gvalpha(x)}{\gsalpha(x)} \, \dd x. \]
Differentiate with respect to \(v\), interchanging derivative and integral (justified above):
\[ D'(v) = \int_{-\infty}^{\infty} \frac{\partial}{\partial v} \left[ \gvalpha(x) \log \frac{\gvalpha(x)}{\gsalpha(x)} \right] \dd x. \]
Apply the product rule for differentiation w.r.t. \(v\):
\[ \frac{\partial}{\partial v} \left[ \gvalpha \log \frac{\gvalpha}{\gsalpha} \right] = \frac{\partial \gvalpha}{\partial v} \left( 1 + \log \frac{\gvalpha(x)}{\gsalpha(x)} \right). \]
Therefore:
\[ D'(v) = \int_{-\infty}^{\infty} \frac{\partial \gvalpha(x)}{\partial v} \left( 1 + \log \frac{\gvalpha(x)}{\gsalpha(x)} \right) \dd x. \tag{A.1} \]

Step 2: Relate the scale derivative \(\partial \gvalpha / \partial v\) to the spatial derivative \(g_v^{(\alpha)\prime}\)

Using the scaling property \(\gvalpha(x) = v^{-1/\alpha} g_1^{(\alpha)}(y)\) where \(y=v^{-1/\alpha} x\), differentiation w.r.t. \(v\) via the chain rule gives:
\begin{align*} \frac{\partial \gvalpha(x)}{\partial v} &= \frac{\partial}{\partial v} \left[ v^{-1/\alpha} g_1^{(\alpha)}(y) \right] \\ &= (-\tfrac{1}{\alpha} v^{-1/\alpha - 1}) g_1^{(\alpha)}(y) + v^{-1/\alpha} g_1^{(\alpha)\prime}(y) \frac{\partial y}{\partial v} \\ &= -\tfrac{1}{\alpha v} v^{-1/\alpha} g_1^{(\alpha)}(y) + v^{-1/\alpha} g_1^{(\alpha)\prime}(y) (-\tfrac{1}{\alpha} v^{-1/\alpha - 1} x) \\ &= -\tfrac{1}{\alpha v} \left[ v^{-1/\alpha} g_1^{(\alpha)}(y) + (v^{-1/\alpha} x) v^{-1/\alpha} g_1^{(\alpha)\prime}(y) \right] \\ &= -\tfrac{1}{\alpha v} \left[ \gvalpha(x) + x (v^{-2/\alpha} g_1^{(\alpha)\prime}(v^{-1/\alpha}x)) \right] \end{align*}
Recognizing that \(g_v^{(\alpha)\prime}(x) = \frac{d}{dx}[v^{-1/\alpha} g_1^{(\alpha)}(v^{-1/\alpha} x)] = v^{-2/\alpha} g_1^{(\alpha)\prime}(v^{-1/\alpha} x)\), we get:
\[ \frac{\partial \gvalpha(x)}{\partial v} = -\frac{1}{\alpha v} \left[ \gvalpha(x) + x g_v^{(\alpha)\prime}(x) \right]. \tag{A.2} \label{eq:A.2} \]
This identity relates the scale derivative to the spatial derivative (cf. \cite[Ch 2]{zolotarev3}).

Step 3: Substitute and Evaluate Key Integral \( \int x g_v' dx \)

Substitute (A.2) into (A.1):
\[ D'(v) = \int_{-\infty}^{\infty} \left( -\frac{1}{\alpha v} [\gvalpha(x) + x g_v'(x)] \right) \left( 1 + \log \frac{\gvalpha(x)}{\gsalpha(x)} \right) \dd x. \]
\[ D'(v) = -\frac{1}{\alpha v} \int_{-\infty}^{\infty} (\gvalpha(x) + x g_v'(x)) \left( 1 + \log \frac{\gvalpha(x)}{\gsalpha(x)} \right) \dd x. \tag{A.3} \]
To proceed, we first evaluate \(\int x g_v'(x) \dd x\). Differentiating the normalisation condition \(\int \gvalpha(x) \dd x = 1\) with respect to \(v\) gives \(\int \frac{\partial \gvalpha}{\partial v} \dd x = 0\). Substituting (A.2):
\[ \int_{-\infty}^{\infty} -\frac{1}{\alpha v} \left[ \gvalpha(x) + x g_v'(x) \right] \dd x = 0 \]
\[ \implies \int_{-\infty}^{\infty} \gvalpha(x) \dd x + \int_{-\infty}^{\infty} x g_v'(x) \dd x = 0 \]
\[ \implies 1 + \int_{-\infty}^{\infty} x g_v'(x) \dd x = 0 \]
Thus, we establish rigorously for all \(\alpha \in (0, 2]\):
\[ \int_{-\infty}^{\infty} x g_v'(x) \dd x = -1. \tag{A.4} \]
(Alternatively, using IBP: \(\int x g_v' dx = [x g_v]_{-\infty}^\infty - \int g_v dx = 0 - 1 = -1\), since \(x g_v(x) \sim |x|^{-\alpha} \to 0\) as \(|x|\to\infty\) for \(\alpha>0\)).

Step 4: Expand Integrand and Use Integration by Parts

Expand the integrand in (A.3):
\[ (\gvalpha + x g_v') (1 + \log(\gvalpha/\gsalpha)) = \gvalpha(1 + \log(\gvalpha/\gsalpha)) + x g_v'(1 + \log(\gvalpha/\gsalpha)) \]
Substitute back into (A.3):
\[ D'(v) = -\frac{1}{\alpha v} \left[ \int \gvalpha(1 + \log(\gvalpha/\gsalpha)) \dd x + \int x g_v'(1 + \log(\gvalpha/\gsalpha)) \dd x \right] \]
Evaluate the first integral: \( \int \gvalpha(1 + \log(\gvalpha/\gsalpha)) \dd x = 1 + D(v) \).
Evaluate the second integral, denoted \(I_B\):
\[ I_B = \int x g_v'(1 + \log(\gvalpha/\gsalpha)) \dd x = \int x g_v' \dd x + \int x g_v' \log(\gvalpha/\gsalpha) \dd x \]
Using (A.4), this becomes:
\[ I_B = -1 + \int x g_v'(x) \log\left(\frac{\gvalpha(x)}{\gsalpha(x)}\right) \dd x \]
Let \( I' = \int x g_v'(x) \log\left(\frac{\gvalpha(x)}{\gsalpha(x)}\right) \dd x \). Evaluate \(I'\) using IBP with \( u = x \log(\gvalpha/\gsalpha) \) and \( dv = g_v'(x) dx \), so \( v_{\text{IBP}} = g_v(x) \). Note: \(v_{\text{IBP}}\) here refers to the result of integrating \(dv\), distinct from the scale parameter \(v\).
\[ du = \left[ \log\left(\frac{\gvalpha}{\gsalpha}\right) + x (\pF_v(x) - \pF_s(x)) \right] \dd x \]
\[ I' = \left[ x g_v(x) \log\left(\frac{\gvalpha(x)}{\gsalpha(x)}\right) \right]_{-\infty}^{\infty} - \int_{-\infty}^{\infty} g_v(x) \left[ \log\left(\frac{\gvalpha}{\gsalpha}\right) + x (\pF_v(x) - \pF_s(x)) \right] \dd x \]
Justify vanishing boundary term: As \(|x| \to \infty\), \(\gvalpha(x) \sim C v |x|^{-1-\alpha}\) and \(\log(\gvalpha(x)/\gsalpha(x)) \to \log(v/s)\) (a constant). Thus, the term behaves like \(x \cdot (C v |x|^{-1-\alpha}) \cdot (\text{const}) \propto \text{sgn}(x) |x|^{-\alpha}\), which tends to 0 since \(\alpha > 0\). The boundary term is 0.
\[ I' = 0 - \int g_v \log\left(\frac{\gvalpha}{\gsalpha}\right) \dd x - \int x g_v (\pF_v - \pF_s) \dd x \]
\[ I' = - D(v) - \int x \gvalpha(x) (\pF_v(x) - \pF_s(x)) \dd x \tag{A.5} \]

Step 5: Combine Terms to Reach the Final Identity

Substitute the results for the integrals back into the expression for \(D'(v)\):
\[ D'(v) = -\frac{1}{\alpha v} \left[ (1 + D(v)) + I_B \right] \]
\[ D'(v) = -\frac{1}{\alpha v} \left[ (1 + D(v)) + (-1 + I') \right] \]
\[ D'(v) = -\frac{1}{\alpha v} \left[ D(v) + I' \right] \]
Substitute the expression for \(I'\) from (A.5):
\[ D'(v) = -\frac{1}{\alpha v} \left[ D(v) + \left( - D(v) - \int x \gvalpha (\pF_v - \pF_s) \dd x \right) \right] \]
\[ D'(v) = -\frac{1}{\alpha v} \left[ - \int x \gvalpha(x) (\pF_v(x) - \pF_s(x)) \dd x \right] \]
\[ D'(v) = \frac{1}{\alpha v} \int_{-\infty}^{\infty} x \gvalpha(x) (\pF_{\gvalpha}(x) - \pF_{\gsalpha}(x)) \, \dd x \]

Conclusion:
The identity is established rigorously:
\[ \boxed{ D'(v) = \frac{1}{\alpha v} \int_{-\infty}^{\infty} x \gvalpha(x) (\pF_{\gvalpha}(x) - \pF_{\gsalpha}(x)) \, \dd x } \]
This proof confirms the consistency identity for all \(\alpha \in (0, 2]\), relying on established properties of S\(\alpha\)S densities and careful application of calculus theorems.

\begin{remark}[Integrability of the Score Difference Term]
\label{rem:integrability_check}
We briefly confirm the convergence of the integral defining the
expectation in the final identity,
\[
 \mathbb{E}_{g_v}[X (\pF_v - \pF_s)] =
 \int_{-\infty}^{\infty} x \gvalpha(x) (\pF_{\gvalpha}(x) - \pF_{\gsalpha}(x)) \, \dd x.
\]
This relies on the asymptotic behavior of S\(\alpha\)S densities
and their derivatives for large \(|x|\) \cite{samorodnitsky, zolotarev3}:
\begin{itemize}
    \item The density decays as \( \gvalpha(x) \sim c_v |x|^{-1-\alpha} \), where \(c_v > 0\).
    \item The derivative decays as \( g_v^{(\alpha)\prime}(x) \sim -c_v(1+\alpha) \text{sgn}(x) |x|^{-2-\alpha} \).
    \item Consequently, the Fisher score behaves as \( \pF_{\gvalpha}(x) = g_v^{(\alpha)\prime}(x) / \gvalpha(x) \sim -(1+\alpha) x^{-1} \) (for large \(x\), similarly for \(-x\)).
    \item The score difference term thus decays at least as fast as \( (\pF_v(x) - \pF_s(x)) = O(|x|^{-1}) \) as \(|x| \to \infty\). (The leading \(x^{-1}\) terms might even cancel if \(v=s\), leading to faster decay, but \(O(|x|^{-1})\) is sufficient).
\end{itemize}
Therefore, the integrand \( x \gvalpha(x) (\pF_v(x) - \pF_s(x)) \) has the asymptotic behavior:
\[
  x \cdot O(|x|^{-1-\alpha}) \cdot O(|x|^{-1}) = O(|x|^{-1-\alpha}).
\]
Since the exponent \(1+\alpha > 1\) for all relevant \(\alpha \in (0, 2]\), this function \(O(|x|^{-1-\alpha})\) is absolutely integrable over \((\infty, -R] \cup [R, \infty)\) for sufficiently large \(R\). The integrand is also continuous (and thus bounded) on \([-R, R]\), ensuring the integral converges over the entire real line \(\mathbb{R}\). Thus, the expectation \(\mathbb{E}_{g_v}[X (\pF_v - \pF_s)]\) is well-defined and finite.
\end{remark}

\qed

\section{Numerical Validation of the Consistency Identity} \label{app:numerical_validation}

This appendix details the robust numerical validation of the core consistency identity (Proposition \ref{prop:consistency_id}):
\[
D'(v) \stackrel{?}{=} \frac{1}{\alpha v} \int_{-\infty}^{\infty} x \gvalpha(x) (\pF_{\gvalpha}(x) - \pF_{\gsalpha}(x)) \, \dd x
\]
where \(D(v) = D(\gvalpha \| \gsalpha)\). The validation encompasses both analytically tractable cases (Tier 1: \(\alpha=1, 2\)) and the general S\(\alpha\)S case (Tier 2: exemplified by \(\alpha=1.5\)). The results demonstrate high-precision agreement, confirming the identity computationally. The methodology employed overcomes numerical challenges inherent in dealing with heavy-tailed distributions and score function estimations, which affected preliminary tests discussed elsewhere.

\subsection{Methodology}
The validation compares the Left-Hand Side (LHS), \(D'(v)\), with the Right-Hand Side (RHS), the scaled integral expression.

\subsubsection*{Tier 1: Analytic Cases (\(\alpha=1, 2\))}
For Cauchy (\(\alpha=1\)) and Gaussian (\(\alpha=2\)) distributions, we leverage known analytic formulas:
\begin{itemize}
    \item \textbf{LHS (\(D'(v)\)):} Calculated directly using the exact formula for \(D'(v)\) derived from the known closed-form relative entropy (see Section \ref{sec:cauchy} for Cauchy; standard result for Gaussian KL divergence).
    \item \textbf{RHS (Integral):} The integral is computed using `scipy.integrate.quad` over \((-\infty, \infty)\) with high precision settings (`epsabs=1e-12`, `epsrel=1e-12`). The integrand \(x \gvalpha(x) (\pF_{\gvalpha}(x) - \pF_{\gsalpha}(x))\) is constructed using the exact analytic formulas for the respective PDFs (\(\gvalpha\)) and Fisher scores (\(\pF_{\gvalpha}, \pF_{\gsalpha}\)). The final RHS value is obtained by dividing the integral result by \(\alpha v\).
\end{itemize}
This approach minimises numerical error, primarily limiting it to the quadrature tolerance.

\subsubsection*{Tier 2: General S\(\alpha\)S Case (\(\alpha=1.5\))}
For general \(\alpha\) where closed forms are unavailable, a robust numerical pipeline is employed:
\begin{itemize}

    \item \textbf{PDF Evaluation:} S$\alpha$S PDFs ($\gvalpha$, $\gsalpha$, etc.) are evaluated using \texttt{scipy.stats.levy\_stable.pdf}, after proper scale parameter conversion: \(c = s^{1/\alpha}\).

    \item \textbf{Score Function ($\pF$):} Fisher scores $\pF_f(x) = f'(x)/f(x)$ are evaluated pointwise within the integrator by numerically differentiating the log-PDF: \( \pF_f(x) \approx \frac{d}{dx}[\log f(x)] \). This derivative is computed using a high-order (5-point stencil, $\mathcal{O}(h^4)$) finite difference method implemented via a custom function (\texttt{sp\_derivative} in the validation code), with a small step size (e.g., \texttt{score\_dx = 1e-6}). Note that computing the score via numerical differentiation of \(\log f(x)\) (i.e., \(\pF_f(x) \approx \frac{d}{dx}[\log f(x)]\)) is preferred over computing \(f'(x)/f(x)\) directly, as it avoids potential numerical instability caused by dividing by very small values of \(f(x)\) in the distribution tails.

    \item \textbf{Relative Entropy ($D(v \| s)$):} Computed using adaptive quadrature (\texttt{scipy.integrate.quad}) over $(-\infty, \infty)$ with stringent tolerances (e.g., \texttt{epsabs=1e-10}, \texttt{epsrel=1e-10}). The integrand is: \( \gvalpha(x) \log\left( \frac{\gvalpha(x)}{\gsalpha(x)} \right) \), evaluated using numerically computed PDFs, with careful handling of potential $\log(0)$ issues.

    \item \textbf{LHS ($D'(v)$ Estimation):} Estimated via high-order finite differences applied to the numerically computed $D(v \| s)$ values. A 5-point stencil formula ($\mathcal{O}(h^4)$) is used: \( D'(v) \approx \frac{-D(v+2h) + 8D(v+h) - 8D(v-h) + D(v-2h)}{12h}. \) This 5-point stencil formula offers high accuracy with a truncation error of order \(O(h^4)\), assuming \(D(v)\) is sufficiently smooth. The step size \(h\) requires careful selection to balance truncation error (favours small \(h\)) and numerical round-off error (favours larger \(h\)). Convergence and stability are verified by observing the results across a range of decreasing \(h\) values, as presented in Table \ref{tab:numval_tier2}. Results are reported for several small step sizes $h$ (e.g., $10^{-2}$ down to $10^{-4}$) to assess convergence.

    \item \textbf{RHS (Integral Estimation):} The integral \( \int x\, \gvalpha(x) \left( \pF_{\gvalpha}(x) - \pF_{\gsalpha}(x) \right) dx \) is computed via \texttt{scipy.integrate.quad} over $(-\infty, \infty)$, using tolerances matching the KL divergence calculation. The integrand uses numerically evaluated $\gvalpha$ and pointwise computed scores $\pF_{\gvalpha}, \pF_{\gsalpha}$ via the log-derivative method. The final RHS value is obtained by dividing the integral result by $\alpha v$.
    
\end{itemize}
The robustness of the numerical results was confirmed by checking convergence under refinement of discretisation parameters, such as the integration tolerances for \texttt{scipy.integrate.quad} and the step size \(h\) for finite difference calculations (as demonstrated for \(D'(v)\) estimation in Table \ref{tab:numval_tier2}). This robust pipeline aims to control errors introduced by numerical differentiation and integration, especially crucial for heavy-tailed distributions.

\subsection{Results}
The validation was performed using Python with SciPy and NumPy. The parameters used for illustration were \(v=1.2\) and \(s=1.0\).

\subsubsection*{Tier 1 Results (\(\alpha=1, 2\))}

Both Cauchy and Gaussian cases showed excellent agreement, with relative errors consistent with machine precision and quadrature tolerance limits.

\begin{itemize}
    \item \textbf{Cauchy (\(\alpha=1\)):}
    \[
    \begin{aligned}
    \text{LHS (Exact)} &= 0.07575757575757573, \\
    \text{RHS (Integral)} &= 0.07575757575757575, \\
    \text{Relative Error} &\approx 1.8 \times 10^{-16}.
    \end{aligned}
    \]

    \item \textbf{Gaussian (\(\alpha=2\)):}
    \[
    \begin{aligned}
    \text{LHS (Exact)} &= 0.08333333333333331, \\
    \text{RHS (Integral)} &= 0.08333333333333331, \\
    \text{Relative Error} &= 0.0.
    \end{aligned}
    \]
\end{itemize}

These results confirm the identity holds exactly where analytic comparison is possible.

\subsubsection*{Tier 2 Results (\(\alpha=1.5\))}
The robust numerical pipeline yielded high-precision agreement between the LHS (\(D'(v)\) estimate) and the RHS (scaled score integral estimate). Table \ref{tab:numval_tier2} shows the results for different step sizes \(h\) used in the \(D'(v)\) finite difference calculation. The RHS value was calculated once and used as the reference.

\begin{table}[htbp]
\centering
\caption{Robust Tier 2 Numerical Verification: Consistency Identity (\(\alpha=1.5, v=1.2, s=1.0\)). \newline \footnotesize LHS is \(D'(v)\) estimated via 5-point finite difference with step size \(h\). RHS is \(\frac{1}{\alpha v} \mathbb{E}_{g_v}[X(\pF_{v} - \pF_{s})]\) calculated via robust integration. Relative error is \(|\text{LHS}-\text{RHS}|/|\text{RHS}|\).}
\label{tab:numval_tier2}
\begin{tabular}{ccccc}
\hline
Step Size \(h\) & LHS (\(D'(v)\)) Estimate & RHS Estimate & Abs Error & Rel Error \\
\hline
1.00e-02 & 0.0657857988 & 0.0657857992 & 3.65e-10 & 5.55e-09 \\ 
5.00e-03 & 0.0657857994 & 0.0657857992 & 1.46e-10 & 2.22e-09 \\ 
1.00e-03 & 0.0657858066 & 0.0657857992 & 7.40e-09 & 1.12e-07 \\ 
5.00e-04 & 0.0657857813 & 0.0657857992 & 1.79e-08 & 2.73e-07 \\ 
1.00e-04 & 0.0657858593 & 0.0657857992 & 6.01e-08 & 9.13e-07 \\ 
\hline
\end{tabular}
\end{table}

The close agreement (relative errors \(\lesssim 10^{-6}\)) across various small step sizes \(h\) demonstrates the stability and convergence of the numerical methods and provides strong evidence for the validity of the consistency identity (Proposition \ref{prop:consistency_id}) for general \(\alpha\). The previously reported larger discrepancies in preliminary tests were confirmed to be artifacts of less robust numerical implementation choices (e.g., domain truncation, basic numerical differentiation, fixed-grid integration).

\subsection{Conclusion}
The numerical simulations, conducted with robust methods appropriate for heavy-tailed distributions, provide high-precision validation of the core theoretical result: the mathematical equivalence of the chain rule and integral formulations for $\Malpha(\gvalpha \| \gsalpha)$. The consistency holds robustly across different $\alpha$ values. Furthermore, the numerical framework allowed for testing conjectures. Evaluation confirms the analytical finding (Section \ref{sec:discussion_lsi}) that the simple Cauchy Log-Sobolev inequality analogue \(D \le C M_1\) does not hold, as the ratio \(D/M_1\) is unbounded. This underscores the challenges in formulating simple functional inequalities for heavy-tailed distributions and motivates exploring weighted or alternative forms. The full validation code (\texttt{validation.py}) and detailed results (\texttt{validation\_results.json}) are available as supplementary material and also accessible via the Colab notebook at \href{https://colab.research.google.com/drive/1N1az1CoKNtbP9u7w8Aq1UHHQUvcT83I4}{this link}.
.

\end{appendices}

\end{document}